\newtheorem{theorem}{Theorem}[section]
\newtheorem{lemma}{Lemma}[section]
\newtheorem{corollary}{Corollary}[section]
\newtheorem{definition}{Definition}[section]
\newtheorem{example}{Example}[section]
\numberwithin{equation}{section}
\begin{document}
\title{MacWilliams identities for  poset level weight enumerators of linear codes}

\date{}

\maketitle
\begin{center}

\author{{A.\ Seda}$^{1}$, S.\ Vedat $^{2,*}$\\
\textnormal{$^{1}$ Department of Mathematics, Yildiz Technical University\\
$^{2}$ Department of Mathematical Engineering, Yildiz Technical University\\
$^{*}$Email: vsiap@yildiz.edu.tr}}
\end{center}

\begin{abstract}

Codes over various metrics such as Rosenbloom-Tsfasman (RT), Lee, etc. have been considered. Recently, codes over poset metrics have been studied. Poset metric is a great generalization of many metrics especially the well-known ones such as the RT and the Hamming metrics. Poset metric can be realized on the channels with localized error occurrences.  It has  been shown that MacWilliams identities are not admissible for codes over poset metrics in general [Kim and Oh, 2005]. Lately, to overcome this problem some further studies on MacWilliams identities over poset metrics has been presented. In this paper, we introduce new poset level weight enumerators of linear codes over Frobenius commutative rings. We derive MacWilliams-type identities for each of the given enumerators which generalize in great deal the previous results discussed in the literature. Most of the weight enumerators in the literature such as Hamming, Rosenbloom-Tsfasman and complete $m$-spotty weight enumerators  follow as corollaries to these identities especially.

\textbf{Keywords:}  MacWilliams identity; linear codes; poset codes

\textbf{2010 MSC:} 94B05; 94B60; 94B99

\end{abstract}

\section{ History and Motivation of the Problem}

       The metric that is used for detecting errors hence correcting them afterwards depends on the channel  transferring or storing digital data, and the media used where different scenarios take place and different type of errors occur. The most used and well-known metric is the Hamming metric. A different error type where the sender knows the possible error location but not the error value itself is introduced by Bassalygo, Gelfand, and Pinsker \cite{2} and later by Roth and Seroussi in  \cite{evet}. Some work on this direction with applications is pursued by the same authors in \cite{3}, and also by Ahlswede, Bassalygo, and Pinsker in \cite{1}, and by Larsson in \cite{4} and Roth in  \cite{evet}. A new construction method for codes correcting multiple localized burst errors is proposed in \cite{0}. Another work that distinguishes between errors by prioritizing some cases is presented in \cite{Packet} and the performance of such a scheme is verified on  the  popular H.264/AVC codec. For instance, in some wireless communication systems, headers of the transmitted data such as the frame control, duration and address are more important than the
         frame body. This is due the fact that the errors in the headers may cause a rejection of the transmission which makes these positions  more important than the others. In order to solve this problem a method called unequal error protection (UEP) is  first introduced by Masnik et al.  \cite{masnik}.

         Some recent studies towards this direction are pursued in \cite{chang,huang,morelos} and further constructions of such codes are presented and some bounds are studied by Kuriata in \cite{kuriata}.

        As seen above one can device a code with the knowledge that some bits due to their location in a codeword  could be more  vulnerable to the errors or play more important role comparing the other ones. To sort out this phenomena one may use a poset metric which depends on the position of bits and hence can help on distinguishing between the locations of bits. Further, burst errors are also very common and hence this suggests to combine these two properties and define a new metric which we call $m$-spotty poset metric and study the relation between the weight distributions of codes and their duals. 
        
  The motivation of this work mainly relies on the very recent studies done on both $m$-spotty and poset weight enumerators. The $m$-spotty weights were introduced since they are capable of modelling the errors that frequently appear in flash memory disks as bursts in bytes \cite{umanesan1,umanesan2}. For further information on spotty byte errors the readers can refer to \cite{fujiwara,suzuki1}.  Due to this important fact the researchers have been extensively studying its properties and especially MacWilliams identities that relate the weight enumerators of codes to their duals. In \cite{suzuki2},  $m$-spotty weight enumerators of $m$-spotty byte error control codes are introduced and the MacWilliams identity for m-spotty Hamming weight enumerators for binary m-spotty byte error control codes are established. In addition, this generalization  includes the MacWilliams identity for the Hamming weight enumerator as a special case. In \cite{irfansiap1},  $m$-spotty Lee weights are introduced and a MacWilliams-type identity for m-spotty Lee weight enumerators is proved .  In \cite{vedatsiap1}, the results obtained in  \cite{suzuki2} are extended further to arbitrary finite fields.  In \cite{irfansiap2},  $m$-spotty weights and m-spotty weight enumerator of linear codes over the ring $\mathbb{F}_2+u\mathbb{F}_2$ are introduced and a MacWilliams-type identity is established.
Later,  in \cite{vedatsiap2}, $m$-spotty weights for codes over the ring $\mathbb{F}_2+v\mathbb{F}_2=\{0,1,v,1+v \}$ with $ v^2=v$ are introduced and a MacWilliams-type identity is also proved. Recently, in \cite{vedatsiap3,vedatsiap4}, the studies on MacWilliams identities for m-spotty byte error control codes have been considered for different metrics. Further, in \cite{sharma1,sharma2,sharma3,suzuki3}, the studies on MacWilliams identities for m-spotty byte error control codes have been considered for some new m-spotty weight enumerators and their properties are studied.

  On the other hand, studies on poset weight enumerators are also very recent \cite{choi,felix}. In \cite{brualdi}, the original problem  studied by Niederreiter \cite{Niederreiter1,Niederreiter2,Niederreiter3} on optimal parameters of linear codes is considered in a  more general setting of partially ordered sets and in this setting  poset-codes are introduced. Niederreiter's setting was viewed as the disjoint union of chains and extended some of Niederreiter's bounds and also obtained bounds for posets which are the product of two chains.
  Lately, poset codes are shown to outperform better then the classical ones while applied in decoding processes \cite{posetdecoders}. In \cite{kwang}, all poset structures that admit the MacWilliams identity with respect to ideal based weights are classified, and the MacWilliams identities for poset weight enumerators corresponding to such posets are derived. It is shown that being a hierarchical poset is a necessary and sufficient condition for a poset to admit such a  MacWilliams type identity \cite{kwang}. An explicit relation is also derived between the poset weight distribution of a hierarchical poset code and the $\bar{P}$ (dual poset)-weight distribution of the dual code \cite{kwang}. Recently, in \cite{akbiyik1}, an alternative P-complete weight enumerator of linear codes with respect to poset metric that includes the hierarchical posets consisting of more variables is defined and a MacWilliams-type identity is proved. Poset weights also generalize the Hamming weights and Rosenbloom-Tsfasman weights. The interesting case is the Rosenbloom-Tsfasman (RT) which is a special poset consisting of a single chain. Some work on RT metric over various special finite Frobenius rings  related to MacWilliams identities is done \cite{irfansiap5,irfansiap4, irfansiap3}.

In Section \ref{sec2}, some facts and  preliminaries that will be referred to in the sequel are presented. In  Section \ref{sec3}, the byte weight enumerator for a linear poset code over Frobenius ring is introduced and  a MacWilliams-type identity is also proven. In Section \ref{sec4}, the definition of complete m-spotty poset level weight enumerator for a poset code ($P-$ code) $C$ over Frobenius rings is introduced and a MacWilliams-type identity for complete m-spotty poset level weight enumerator is proved. Moreover, the definition of poset level weight enumerator of binary linear codes is extended to linear codes (\cite{akbiyik1}) over Frobenius rings and  a MacWilliams-type identity for these weight enumerators is obtained. Also, a new m-spotty weight enumerator which is called m-spotty poset level weight enumerator is introduced and  the MacWilliams identity with respect to this weight enumerator is also established. In Section \ref{sec5}, some illustrative examples are given and Section \ref{sec6} concludes the paper.

\section{Preliminiaries}\label{sec2}

In this section, we state  some basic results and definitions.
For some terms and detailed information especially regarding Frobenius rings,
the readers are welcome to refer to \cite{wood}.

Let $R$ be a finite ring and let $N$ be a positive integer. A linear code $C$ over $R$ is an $R-$submodule of $R^N$. The elements of $C$ are referred as codewords. By abuse of terminology, the
 elements of $R^N$ will be called vectors.
\begin{definition}\label{byteweightdef1} The $i$th bytes $u^i$ of a vector $u\in R^N$ whose index set is partitioned into $s$ not necessarily equal parts each of size $n_i$ is defined by
\begin{eqnarray*}
u^1&=&(u_{11},u_{12},...,u_{1{n_1}})\in R^{n_1},\\
u^2&=&(u_{21},u_{22},...,u_{2{n_2}})\in R^{n_2},\\
&...&\\
u^s&=&(u_{s1},u_{s2},...,u_{s{n_s}})\in R^{n_s}.
\end{eqnarray*}
where $N=n_1+n_2+...+n_s.$ 
\end{definition}
The $s-$tuple $(u^1,u^2,...,u^s)$ is called the $s-$level representation of a vector $u$ of length $N$ such that each $n_i-$tuple $u^i$ denotes the part in the $i$th level of the vector.

Throughout this paper $N$, $u^i$'s and $v^i$'s will be used as in Definition \ref{byteweightdef1}.

\begin{example} \label{ex31}
Let $R=\mathbb{F}_2$ and $N=6$ and $u=(1,0,1,1,0,0)\in \mathbb{F}_2^6$. Then, $u^1=(1,0)$, $u^2=(1)$,$u^3=(1,0,0)$ with respect to the given poset in Figure \ref{Fig100} and the $3-$level $\left( {\left\{ {1,2} \right\} < \left\{ 3 \right\} < \left\{ {4,5,6} \right\}} \right)$  representation of $u$ is given as $(10,1,100)$.
\begin{figure}
\centering
\hfill
   \includegraphics[bb=0 0 300 150 ,scale=0.9] {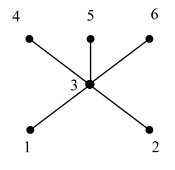}\\
\caption{A poset of size 6 and 3 levels }\label{Fig100}
\end{figure}
\end{example}

For any two vectors $u = \left( {{u^1},{u^2},...,{u^s}} \right)$, $v =
\left( {{v^1},{v^2},...,{v^s}} \right) \in {R^N}$, the inner
product of $u$ and $v$ is given by
\begin{equation} \label{2.1}
\left\langle {u,v} \right\rangle  = \sum\limits_{i = 1}^s
{\left\langle {{u^i},{v^i}} \right\rangle = \sum\limits_{i = 1}^s(\sum_{j=1}^{n_i}
{u_{ij}v_{ij}})}.
\end{equation}

Given a linear code $C\subset R^N$ we define its dual code with respect to the inner product in (\ref{2.1}) as

\begin{equation}
{C^ \bot } = \left\{ {v \in {R^{N}}:\left\langle {u,v}
\right\rangle  = 0\,\, for\,\, all\,\, u \in C} \right\}.
\end{equation}

The Hamming weight of a vector $v\in R^N$, denoted by $w(v)$, is the number of non-zero coordinates of $v$. The  Hamming distance, a metric on $R^N$, between  two vectors $u$ and $v$ is $d(u,v)=w(u-v)$. The minimum distance of the linear code $C$ is the minimal Hamming distance between any two distinct codewords of $C.$ The smallest nonzero Hamming weight in a code is called the minimum Hamming weight of the code. In linear code case the minimum distance and the minimum Hamming weight are equal. A different metric which is refereed to as the poset metric has been of interest to the algebraic coding theorists pretty recently. This is a position based metric and it is a generalization of some important metrics such as Hamming and Rosenbloom-Tsfasman metrics. Now we present some basics on the poset metric over $R^N$. Suppose that $(P,\leq)$ is a partially ordered set of size $N$.  For all $x\in I$ and $y\leq x$ if $y\in I,$  then this subset $I$ of $P$ is called \emph{an ideal} of $P$. If $A\subset P$, then $\langle A\rangle$ is the smallest ideal of $P$ containing $A$. Suppose that $P=\{1,2,3,...,N\}$ and the coordinate positions of elements of ${R}^N$ are labeled by the elements of $P$. For any vector $ v\in {R}^N,$ the $P-$ weight of $v$ is defined by $w_P(v)=| \langle supp(v)\rangle |$ which is the size of  the smallest ideal of $P$ containing the support of $v$, where $supp(v)=\{i\in P: v_i\neq 0\}$. Then naturally the $P-$distance, a metric on $R^N,$ between two vectors $u$ and $v$ is defined as $d_P(u,v)=w_P(u-v)$. There are two direct observations regarding the poset weight: if $P$ is  antichain, then the $P-$ weight is the same as  Hamming weight.  If $P$ consists of a single chain, then $P-$ weight coincides with Rosenbloom-Tsfasman (RT)  weight. If $R^N$ is endowed with a poset metric, then we call a subset $C$ of $R^N$ a poset code. If the poset metric is derived from a poset $P$, then $C$ is called a $P-$code.

\begin {definition} \cite{article. 7,article.2} Suppose that $C$ is a linear P-code of length $N$. $$W_{C,P}(x)=\sum_{u\in C}x^{w_{P}(u)}=\sum_{i=0}^N {A_{i,P}} x^i$$
is called the poset weight enumerator of $C$  where $A_{i,P}=|\{u\in C \mid w_P(u)=i\}|$.
\end {definition}

The following example, presented in \cite{article.2}, shows that a direct attempt for obtaining a MacWilliams identity for poset codes is not possible in general.

\begin {example} \label{ex32}\cite{article.2} Let $P=\{1,2,3\}$ be a poset with order relation $1<2<3,$ i.e., a single chain poset. Let $C_1=\{000,001\}$ and $C_2=\{000,111\}$ be two binary linear codes with respect to the poset $P.$ The poset weight enumerators of $C_1$ and $C_2$ are given by  $W_{C_1,P}(x)=1+x^3=W_{C_2,P}(x)$. The dual codes of $C_1$ and $C_2$ are ${C_1}^{\perp}=\{000,100,010,110\}$ and ${C_2}^{\perp}=\{000,110,101,011\}$, respectively. The P- weight enumerators of the dual codes are given by $W_{{C_1}^\perp ,P}(x)=1+x+2x^2$ and $ W_{{C_2}^\perp,P}(x)=1+x^2+2x^3$ .
\end {example}

As seen in the example above, in \cite{article.2, hier} the problem to obtain a MacWilliams identity is achieved by restricting the structure of the poset to being a hierarchical poset and considering the dual code over the dual poset. In \cite{akbiyik1}, a complete poset weight enumerator is introduced and a MacWilliams identity is obtained for a broaden class of posets including hierarchical posets.  Recently, in \cite{akbiyik2}, the results obtained in  \cite{akbiyik1} has been generalized further to the posets that are forests and further both the code and its dual are considered over the same poset.

From now on we assume that all rings are finite commutative
Frobenius rings. We first give the basic definitions and
theorems, next we state and prove  MacWilliams-type identities of new weight enumerators for linear codes over
finite commutative Frobenius rings. 

\begin{definition}\label{def1} A finite ring $R$ is said to be a Frobenius ring
 if $\widehat R  \cong {R_R}.$
 Here, the character group of the additive group of $R$ is denoted
 by $\widehat R = Ho{m_{Z}}\left( {R,{{C}^ \times }}
 \right)$.
 \end{definition}

It is well-known (cf. \cite{wood}) that if $R$ is a finite
Frobenius ring, then $R$ and $\widehat R$ are isomorphic also as
left $R-$modules.

 \begin{example}\cite{wood}\label{examp1} The ring of integers
 modulo $m$ (${Z}_m$), Galois fields and rings, and $Ma{t_{n \times n}}\left( R \right)$, the ring of all
 $n \times n-$matrices over $R$ are examples of Frobenius rings.
 \end{example}

\begin{definition}\cite{wood}\label{def2} A character $\zeta$ of $R$ is
a generating character if the mapping

\begin{equation}\label{characterfunction}
\zeta :R \to \widehat R,\,\,\zeta \left( u \right) =\chi_u({v})= \chi \left(
{\left\langle {u,v} \right\rangle} \right)
\end{equation}
is an isomorphism of $R-$modules for all $u,v \in R,$ and for all
$\chi \in \widehat R.$
\end{definition}

\begin{theorem}\cite{Claasen}\label{theo1} Let $\chi$ be a character of $R.$ Then
$\chi$ is a generating character if and only if the kernel of
$\chi$ contains no non-zero ideals.
\end{theorem}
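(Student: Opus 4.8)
The plan is to unpack the two requirements hidden in Definition \ref{def2}: that the map $\zeta$ is always an $R$-module homomorphism, and that its bijectivity is controlled exactly by the ideal structure of $\ker\chi$. First I would record that $\zeta\colon R\to\widehat R$, $u\mapsto\chi_u$ with $\chi_u(v)=\chi(\langle u,v\rangle)=\chi(uv)$, is a homomorphism of $R$-modules. Additivity is immediate since $\chi$ is a character, and $R$-linearity is a one-line check using commutativity: with the module action $(r\psi)(v)=\psi(vr)$ on $\widehat R$ we get $(r\zeta(u))(v)=\chi_u(vr)=\chi(uvr)=\chi((ru)v)=\zeta(ru)(v)$. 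Hence the clause ``$\zeta$ is an isomorphism of $R$-modules'' reduces to ``$\zeta$ is bijective.''

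The second preliminary step reduces bijectivity to injectivity. Because the additive group of $R$ is finite abelian, its character group satisfies $\abs{\widehat R}=\abs{R}$ (equivalently, this is the isomorphism $\widehat R\cong R_R$ recorded just after Definition \ref{def1}). Thus $\zeta$ is a map between two finite sets of equal cardinality, so it is injective if and only if surjective if and only if bijective. It therefore suffices to characterize when $\ker\zeta=0$.

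The heart of the argument is the identification of $\ker\zeta$. Computing directly, $u\in\ker\zeta$ means $\chi_u$ is the trivial character, i.e.\ $\chi(uv)=1$ for all $v\in R$, which is precisely $uR\subseteq\ker\chi$. Since $R$ has an identity, $uR$ is the principal ideal generated by $u$ and contains $u$. Hence $\ker\zeta=\{u\in R: uR\subseteq\ker\chi\}$, and the equivalence drops out both ways. If $\ker\chi$ contains no nonzero ideal and $u\in\ker\zeta$, then $uR$ is an ideal inside $\ker\chi$, so $uR=0$ and $u=0$; thus $\zeta$ is injective, hence an isomorphism, and $\chi$ is generating. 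Conversely, if $\ker\chi$ contains a nonzero ideal $I$, choose $0\neq u\in I$; then $uR\subseteq I\subseteq\ker\chi$, so $u$ is a nonzero element of $\ker\zeta$, $\zeta$ fails to be injective, and $\chi$ is not generating.

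The only non-formal input — and the step I expect to be the main obstacle — is the cardinality identity $\abs{\widehat R}=\abs{R}$, which is what upgrades injectivity to a genuine isomorphism; the remaining steps are purely a matter of unwinding definitions. I would also remark that the Frobenius hypothesis is not actually needed for this equivalence: it is the \emph{existence} of some generating character that is equivalent to $R$ being Frobenius, whereas Theorem \ref{theo1} merely identifies which characters enjoy that property.
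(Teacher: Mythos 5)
Your proof is correct, but there is nothing in the paper to compare it against: Theorem \ref{theo1} is stated as a quoted result from Claasen--Goldbach \cite{Claasen}, and the paper supplies no proof of its own. Your argument is the standard one and it does fill this gap: identify $\ker\zeta$ as $\{u\in R: uR\subseteq\ker\chi\}$, observe that since $1\in R$ this kernel is trivial exactly when $\ker\chi$ contains no nonzero ideal, and upgrade injectivity of $\zeta$ to bijectivity by the counting identity $\abs{\widehat R}=\abs{R}$. One small imprecision worth flagging: that counting identity is \emph{not} ``equivalently'' the isomorphism $\widehat R\cong R_R$ recorded after Definition \ref{def1}. The latter is an isomorphism of $R$-modules and is precisely the Frobenius property, whereas $\abs{\widehat R}=\abs{R}$ is the strictly weaker, purely group-theoretic fact that a finite abelian group is (non-canonically) isomorphic to its character group. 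Since your argument uses only the cardinality count, this slip is harmless, and it actually reinforces your closing observation, which is correct: the equivalence in Theorem \ref{theo1} holds over any finite commutative ring with identity, and the Frobenius hypothesis enters only in guaranteeing that a generating character \emph{exists} at all.
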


From Definition \ref{def1}, a finite ring is Frobenius if and only
if it admits a generating character.

\begin{lemma}\cite{macwilliams}\label{lemma1} Let $I \ne \left\{ 0 \right\}$ be an ideal of
$R$ and $\chi$ be a generating character of $R$. Then,
\begin{equation}
\sum\limits_{a \in I} {\chi \left( a \right)}  = 0.
\end{equation}
\end{lemma}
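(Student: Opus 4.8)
The plan is to exploit the translation invariance of a character sum over an additive subgroup, combined with the defining property of a generating character supplied by Theorem \ref{theo1}. Write $S = \sum_{a \in I} \chi(a)$ for the sum in question, and recall that $\chi$, being a character of the additive group of $R$, is a homomorphism into $\mathbb{C}^{\times}$, so that $\chi(a+b) = \chi(a)\chi(b)$ for all $a,b \in R$, and that $\ker \chi = \{a \in R : \chi(a)=1\}$ is an additive subgroup of $R$.

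First I would produce an element of $I$ on which $\chi$ acts nontrivially. Since $\chi$ is a generating character, Theorem \ref{theo1} guarantees that $\ker \chi$ contains no non-zero ideal of $R$. As $I \neq \{0\}$ is by hypothesis an ideal, it cannot be contained in $\ker \chi$; otherwise $I$ would be a non-zero ideal lying inside $\ker\chi$, contradicting Theorem \ref{theo1}. Hence there exists some $b \in I$ with $\chi(b) \neq 1$.

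Next I would use the fact that $I$ is in particular a finite additive subgroup of $R$, so that the translation map $a \mapsto a + b$ is a bijection of $I$ onto itself. Re-indexing the sum by this bijection and then factoring out $\chi(b)$ by multiplicativity yields
\[
S = \sum_{a \in I}\chi(a) = \sum_{a \in I}\chi(a+b) = \chi(b)\sum_{a \in I}\chi(a) = \chi(b)\, S.
\]

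Finally, rearranging gives $\left(1-\chi(b)\right)S = 0$, and since $\chi(b)\neq 1$ the factor $1-\chi(b)$ is a non-zero complex number, forcing $S = 0$, as claimed. The only non-routine step is the very first one, namely extracting a witness $b \in I$ with $\chi(b)\neq 1$; this is precisely where the generating-character hypothesis, via Theorem \ref{theo1}, is indispensable, since for a general character whose kernel happens to contain $I$ the sum would instead equal $\abs{I}$ rather than vanish.
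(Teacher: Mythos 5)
Your proof is correct. The paper itself gives no proof of this lemma---it simply cites MacWilliams and Sloane---and your argument (extract $b \in I$ with $\chi(b) \neq 1$ via Theorem \ref{theo1}, then use translation invariance of the sum over the finite additive group $I$ to get $S = \chi(b)S$, hence $S=0$) is precisely the standard character-orthogonality argument underlying the cited result, with the generating-character hypothesis invoked at exactly the right point.
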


By Lemma \ref{lemma1} and $\chi \left( 0 \right) = 1$, we obtain
the following corollary:

\begin{corollary}\label{corol1} Suppose that $R$ is a ring, with a generating
character $\chi.$ Then,

\begin{equation}
\sum\limits_{a \in I\backslash \left\{ 0 \right\}} {\chi \left( a
\right)}  =  - 1.
\end{equation}
\end{corollary}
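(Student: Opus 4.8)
The plan is to obtain this directly from Lemma \ref{lemma1} by isolating the contribution of the zero element. Since $I$ is a nonzero ideal of $R$ and $\chi$ is a generating character, Lemma \ref{lemma1} applies and gives $\sum_{a \in I} \chi(a) = 0$. The entire proof is then a one-line rearrangement, so I do not expect any substantive difficulty; the work is purely bookkeeping.

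First I would split the index set $I$ into the singleton $\{0\}$ and its complement $I \setminus \{0\}$, writing
$$\sum_{a \in I} \chi(a) = \chi(0) + \sum_{a \in I \setminus \{0\}} \chi(a).$$
Next I would invoke the elementary property of characters that $\chi(0) = 1$, which holds because the additive identity of $R$ is sent to the multiplicative identity of $\mathbb{C}^\times$ under any character. Substituting the value $0$ supplied by Lemma \ref{lemma1} for the left-hand side yields $0 = 1 + \sum_{a \in I \setminus \{0\}} \chi(a)$, and transposing the constant term gives precisely the claimed identity $\sum_{a \in I \setminus \{0\}} \chi(a) = -1$.

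There is essentially no obstacle here, since the only genuine input beyond Lemma \ref{lemma1} is the normalization $\chi(0)=1$. The one point I would flag for correctness is that the hypothesis must implicitly carry the same standing assumption $I \neq \{0\}$ as in Lemma \ref{lemma1}, for otherwise the lemma would not be applicable and the sum over $I \setminus \{0\}$ would be empty; I would therefore read the corollary with $I$ ranging over nonzero ideals of $R$, exactly as in the lemma it specializes.
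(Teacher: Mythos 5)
Your proof is correct and is exactly the paper's argument: the corollary is stated immediately after the remark ``By Lemma \ref{lemma1} and $\chi(0)=1$, we obtain the following corollary,'' which is precisely your decomposition $\sum_{a\in I}\chi(a)=\chi(0)+\sum_{a\in I\setminus\{0\}}\chi(a)$ followed by the rearrangement. Your observation that the hypothesis $I\neq\{0\}$ from Lemma \ref{lemma1} must be carried over implicitly is also a valid reading of the statement.
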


The following lemma plays an important role in deriving a
MacWilliams-type identity for weight enumerators over
finite commutative Frobenius rings:

\begin{lemma}\label{lemma2} Let $f$ be a function defined on $R^N$, and let $\chi$
be a generating character on $R.$ The Hadamard transform
$\widetilde f$ of $f$ is defined by

\begin{equation}
\widetilde f\left( u \right) = \sum\limits_{v \in {R^{N}}} {\chi
\left( {\left\langle {u,v} \right\rangle } \right)f\left( v
\right)} ,\,\,\,u \in {R^{N}}.
\end{equation}

Then, the following relation holds between $f(v)$ and $\widetilde
f\left( u \right):$

\begin{equation}\label{eq81}
\sum\limits_{v \in {C^ \bot }} {f\left( v \right) =
\frac{1}{{\left| C \right|}}\sum\limits_{u \in C} {\widetilde
f\left( u \right)} },
\end{equation}
where $\left| C \right|$ denotes the size of the code $C$.
\end{lemma}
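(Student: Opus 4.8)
The plan is to start from the right-hand side of \eqref{eq81} and reduce it to the left-hand side by interchanging the order of summation and then evaluating a single inner character sum with the generating-character machinery already in place. First I would unfold the definition of the Hadamard transform and write
\[
\sum_{u \in C} \widetilde f(u) \;=\; \sum_{u \in C} \sum_{v \in R^N} \chi\!\left(\langle u,v\rangle\right) f(v).
\]
Since both index sets are finite, rearranging this finite double sum lets me pull $f(v)$ out and swap the summations, obtaining
\[
\sum_{u \in C} \widetilde f(u) \;=\; \sum_{v \in R^N} f(v)\, S(v), \qquad S(v) := \sum_{u \in C} \chi\!\left(\langle u,v\rangle\right).
\]
The whole identity then hinges on computing the inner sum $S(v)$ for each fixed $v \in R^N$.

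The key claim is that $S(v)=|C|$ when $v \in C^\perp$ and $S(v)=0$ otherwise; substituting this back collapses the outer sum to $|C|\sum_{v \in C^\perp} f(v)$, and dividing by $|C|$ yields \eqref{eq81}. The first case is immediate: if $v \in C^\perp$ then $\langle u,v\rangle = 0$ for every $u \in C$ by the definition of the dual code, so every summand equals $\chi(0)=1$ and hence $S(v)=|C|$.

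The main obstacle is the second case, $v \notin C^\perp$, where I must rule out accidental cancellations, and this is exactly where the Frobenius hypothesis enters. I would consider the map $\phi_v\colon C \to R$, $u \mapsto \langle u,v\rangle$, which is additive and in fact $R$-linear because the inner product \eqref{2.1} is $R$-bilinear (using commutativity of $R$) and $C$ is an $R$-submodule; consequently its image $I_v = \{\langle u,v\rangle : u \in C\}$ is an ideal of $R$, and the condition $v \notin C^\perp$ means precisely that $I_v \neq \{0\}$. Grouping the sum $S(v)$ according to the value $a = \langle u,v\rangle$, each nonempty fiber $\phi_v^{-1}(a)$ is a coset of $\ker \phi_v$ and therefore has the common cardinality $\abs{\ker \phi_v}$, so
\[
S(v) \;=\; \abs{\ker\phi_v} \sum_{a \in I_v} \chi(a).
\]
Since $I_v$ is a nonzero ideal and $\chi$ is a generating character, Lemma \ref{lemma1} gives $\sum_{a \in I_v}\chi(a) = 0$, whence $S(v)=0$. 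I expect this reduction, namely recognizing the image as a nonzero ideal so that Lemma \ref{lemma1} applies, to be the crux of the argument: without the generating-character property a single term $\chi(\langle u,v\rangle)$ could equal $1$ even for $\langle u,v\rangle \neq 0$, and the cancellation would fail. Assembling the two cases completes the proof.
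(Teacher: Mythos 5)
Your proof is correct and follows essentially the approach the paper itself invokes: the paper supplies no details, stating only that the proof is similar to Lemma 2 of \cite{macwilliams}, and that classical argument is exactly your scheme of interchanging the two finite sums and showing the inner character sum $S(v)$ equals $\left| C \right|$ for $v \in C^\perp$ and vanishes otherwise. Your handling of the case $v \notin C^\perp$ — recognizing the image of $u \mapsto \left\langle u,v \right\rangle$ as a nonzero ideal of $R$ so that Lemma \ref{lemma1} applies — is precisely the Frobenius-ring adaptation of the field-case argument that the paper leaves implicit, and is exactly the role Lemma \ref{lemma1} is set up to play.
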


\textbf{\textbf{Proof.}} Proof is similar to that of Lemma 2 of
\cite{macwilliams}.

\section{Byte poset level weight enumerator of a linear $P_R-$code}\label{sec3}

  Byte weight enumerators are  introduced for Hamming metrics due to the burst errors that occur in both  transmission  or storing  data processes. In \cite{simonis}, codewords with an index partition are introduced and MacWilliams identity is proven. In this section we combine the byte-weight concept with posets. Further, we point out that the definition in \cite{simonis} becomes a special case by choosing a special poset.

Now we introduce the following weight function that is going to appear in the proof of Theorem \ref{byteweighttheo1}.

\begin{definition} \label{byteweightdef2}
Let $C$ be a linear $P_R-$code over a poset of size $N$ and $s$ levels and let $R=\{\beta_0,\beta_1,...,\beta_{q-1}\}$. To each $n_i$-tuple $u^i$ in $R^{n_i}$ we define the weight function as follows:
\begin{itemize}
\item[(i)] if $S\neq k$, then $\eta_{S:i_1i_2...i_{n_k}}(\beta_{j_1},...,\beta_{j_{n_k}}) =0$,

\item[(ii)] if $S=k$, then
$\eta_{S:i_1i_2...i_{n_k}}(\beta_{j_1},...,\beta_{j_{n_k}}) =\left\{ \begin{array}{ll}
1, & \textrm{ if $(i_1,i_2,...,i_{n_k})=(j_1,j_2,...,j_{n_k})$ }\\
0, & \textrm{otherwise}\\
\end{array}\right.$
\end{itemize}
where $S,k \in \{1,2,...,s\}.$
\end{definition}

\begin{example}
Let $C$ be a binary $P_{\mathbb{F}_2}-$code with $N=6$ and $s=3$. Consider the vector $u$ in  Example \ref{ex31}. Then, $\eta_{1:10}(\beta_1,\beta_0)=\eta_{2:1}(\beta_1)=\eta_{3:100}(\beta_1,\beta_0,\beta_0)=1$. Otherwise
$\eta_{S:i_1i_2...i_{n_k}}(\beta_{j_1},...,\beta_{j_{n_k}})=0$.

\end{example}

We are now ready to give the definition of a byte poset level weight enumerator for a linear $P_R-$code.

\begin{definition}\label{byteweightdef3}
Let $C$ be a linear $P_R-$code over a poset of size $N$ and $s$ levels. Then the byte poset level weight enumerator of $C$ is defined as follows:
\begin{equation*}
B_W\left(C| z_{i: \overline{a}}  :   \overline{a} \in R^{n_i}, 1\leq i\leq s \right )  =\sum_{u\in C}\prod _{{n_j}=a}^{b}\prod _{S=1}^{s}z_{S:00...0}^{\mu_{S:00...0}(u)}...z_{S:i_1...i_{n_j}}^{{\mu_{S:i_1...i_{n_j}}(u)}}...z_{S:q-1...q-1}^{\mu_{S:q-1...q-1}(u)},
\end{equation*}
where $a=min\{n_j:j=1,...,s\}$, $b=max\{n_j:j=1,...,s\}$, and

\begin{equation}\label{mufunction}
\mu_{S:i_1...i_{n_j}}(u)=\sum_{k=1}^s \eta_{S:i_1i_2...i_{n_j}}(u^k).
\end{equation}
\end{definition}

Now, we state the MacWilliams identity for a linear $P_R-$code.

\begin{theorem}\label{byteweighttheo1}
Let $C$ be a linear $P_R-$code over a poset of size $N$ and $s$ levels. Then, the relation between the byte poset level weight enumerator of $C$ and its dual is given by
\begin{equation*}
B_W({C^\perp}| z_{i: \overline{a}}  :   \overline{a} \in R^{n_i}, 1\leq i\leq s)
=\frac{1}{|C|}\sum_{u\in C}\prod\limits_{{n_j} = a}^b {\prod\limits_{S = 1}^s {\prod\limits_{{i_1},{i_2},...,{i_{{n_j}}} \in Q}^{} A } },
\end{equation*}
where $\chi$ is a nontrivial additive character of $R$,
\[A = {\left( {\sum\limits_{{\beta _{{j_1}}},{\beta _{{j_2}}},...,{\beta _{{j_{{n_j}}}}} \in {R^{{n_j}}}}^{} {\chi \left( {\left( {{\beta _{{j_1}}},{\beta _{{j_2}}},...,{\beta _{{j_{{n_j}}}}}} \right)\left( {{\beta _{{i_1}}},{\beta _{{i_2}}},...,{\beta _{{i_{{n_j}}}}}} \right)} \right){z_{S:{j_1},{j_2},...,{j_{{n_j}}}}}} } \right)^{{\mu _{S:{i_1},{i_2},...,{i_{{n_j}}}}}\left( u \right)}},\]
and $Q=\{0,1,...,q-1\}$.
\end{theorem}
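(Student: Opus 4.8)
The plan is to apply the Hadamard-transform identity of Lemma \ref{lemma2} to a suitably chosen weight function $f$ on $R^N$. First I would observe that, since $\mu_{S:\overline a}(v)$ equals $1$ precisely when the $S$th byte $v^S$ equals the tuple $\overline a$ (which forces $\overline a\in R^{n_S}$) and vanishes otherwise, the monomial attached to $v$ in Definition \ref{byteweightdef3} collapses to the single product over levels
\[
f(v) := \prod_{S=1}^{s} z_{S:v^S},
\]
so that $B_W\bigl(C^{\perp}\mid z_{i:\overline a}\bigr)=\sum_{v\in C^{\perp}} f(v)$. By Lemma \ref{lemma2} this equals $\frac{1}{|C|}\sum_{u\in C}\widetilde f(u)$, and the entire problem is reduced to computing the Hadamard transform $\widetilde f(u)$ and recognizing it as the claimed triple product.

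For the transform I would exploit the fact that the inner product splits across levels, $\langle u,v\rangle=\sum_{S=1}^{s}\langle u^S,v^S\rangle$, together with multiplicativity of the character, $\chi(a+b)=\chi(a)\chi(b)$. Summing over all $v\in R^N$ amounts to summing each byte $v^S$ independently over $R^{n_S}$, so the transform factors as a product over levels,
\[
\widetilde f(u)=\sum_{v\in R^{N}}\chi\bigl(\langle u,v\rangle\bigr)f(v)
=\prod_{S=1}^{s}\Bigl(\sum_{v^S\in R^{n_S}}\chi\bigl(\langle u^S,v^S\rangle\bigr)\,z_{S:v^S}\Bigr).
\]
Writing $u^S=(\beta_{i_1},\dots,\beta_{i_{n_S}})$ and $v^S=(\beta_{j_1},\dots,\beta_{j_{n_S}})$ identifies each level-$S$ factor with the parenthesized sum appearing in $A$, evaluated at the byte length $n_j=n_S$ and at the index tuple that is the actual value of $u^S$.

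The last step is the bookkeeping that rewrites this $s$-fold product as the triple product over $n_j$, $S$, and index tuples stated in the theorem. Here the exponents $\mu_{S:i_1\dots i_{n_j}}(u)$ act as selectors: for a fixed level $S$ one has $\mu_{S:i_1\dots i_{n_j}}(u)=0$ whenever $n_j\neq n_S$ (the tuple lengths disagree) or whenever the tuple differs from $u^S$, while $\mu_{S:i_1\dots i_{n_j}}(u)=1$ for the single tuple equal to $u^S$. Consequently every factor $A$ with a vanishing exponent equals $1$, and the product over $i_1,\dots,i_{n_j}\in Q$ and over $n_j$ picks out exactly one genuine factor $\sum_{v^S}\chi(\langle u^S,v^S\rangle)z_{S:v^S}$ for each level $S$, which reproduces the factorization of $\widetilde f(u)$ obtained above. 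I expect the main obstacle to be purely notational rather than mathematical: one must justify carefully that the product over all index tuples and all byte lengths, weighted by the $\mu$-exponents, collapses to the single true factor per level, and check that the length-matching built into $\eta$ (via Definition \ref{byteweightdef2}) is precisely what makes the product over $n_j$ well posed. Once this indexing is in place, the identity is immediate from Lemma \ref{lemma2}.
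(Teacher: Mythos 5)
Your proposal is correct and follows essentially the same route as the paper's proof: both apply Lemma \ref{lemma2} to the byte-weight monomial, factor the Hadamard transform across the $s$ levels via $\chi_u(v)=\prod_{S=1}^{s}\chi_{u^S}(v^S)$, and use the selector property of $\eta$ (equivalently of $\mu$) to collapse the sums into the stated triple product. Your upfront simplification $f(v)=\prod_{S=1}^{s} z_{S:v^S}$ merely streamlines the bookkeeping that the paper instead carries through the transform computation.
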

\begin{proof}
We recall Lemma \ref{lemma2} that
\begin{equation}\label{unluteorem}
\sum_{v\in C^\perp}f(v)=\frac{1}{|C|}\sum_{u\in C}\tilde{f}(u),
\end{equation}
 where
\begin{equation*}
\tilde{f}(u)=\sum_{v\in V}\chi_u(v)f(v).
\end{equation*}
Let $V=R^N$ and for $v\in R^N$ we take
\begin{equation*}
f(v)=\prod_{{n_j}=\alpha_{\min}}^{\alpha_{\max}}\prod _{S=1}^{s}z_{S:00...0}^{\mu_{S:00...0}(v)}...z_{S:i_1...i_{n_j}}   ^{\mu_{S:i_1...i_{n_j}}(v)}...z_{S:q-1...q-1}^{\mu_{S:q-1...q-1}}(v).
\end{equation*}
Then,
\begin{eqnarray*}
\tilde{f}(u)&=&\sum_{v\in R^N}\chi_u(v)f(v)\\
&=&\sum_{v\in R^N}\chi_u(v)\prod_{{n_j}=\alpha_{\min}}^{\alpha_{\max}}\prod _{S=1}^{s}z_{S:00...0}^{\mu_{S:00...0}(v)}...z_{S:i_1...i_{n_j}}^{\mu_{S:i_1...i_{n_j}}(v)}...z_{S:q-1...q-1}^{\mu_{S:q-1...q-1}(v)}.
\end{eqnarray*}

For fixed $u\in C$, first we compute $\tilde f(u)$. Using the definition of ${\mu}_{S:i_1,i_2,...,i_{n_j}}$ in (\ref{mufunction}), we obtain
\begin{equation*}
\tilde{f}(u)=\sum_{v\in V}\prod_{{n_j}=\alpha_{\min}}^{\alpha_{\max}}\prod _{S=1}^{s}\chi_u(v)z_{S:00...0}^{\sum_{k=1}^s\eta_{S:00...0}(v^k)}...\; z_{S:i_1...i_{n_j}}^{\sum_{k=1}^s\eta_{S:i_1...i_{n_j}}(v^k)}...\; z_{S:q-1...q-1}^{\sum_{k=1}^s\eta_{S:q-1...q-1}(v^k)}.
\end{equation*}

We rewrite each vector $u$ and $v$ in their $s-$level representation, and we observe that $\chi_u(v)=\chi_{u^1}(v^1)...\chi_{u^s}(v^s).$ Collecting each $v^i$, $1 \leq i \leq s$ under a single sum, we obtain
\begin{eqnarray*}
 \tilde{f}(u)&=&\Big( \sum_{v^1\in R^{n_1}}\chi_{u^1}(v^1)z_{1:00...0}^{\eta_{S:00...0}(v^1)}...z_{1:i_1...i_{n_1}}^{\eta_{S:i_1...i_{n_1}}(v^1)}...\; z_{1:q-1...q-1}^{\eta_{S:q-1...q-1}(v^1)} \Big)\\
&&\Big( \sum_{v^2\in R^{n_2}}\chi_{u^2}(v^2)z_{2:00...0}^{\eta_{S:00...0}(v^2)}...z_{2:i_1...i_{n_2}}^{\eta_{S:i_1...i_{n_2}}(v^2)}...\; z_{2:q-1...q-1}^{\eta_{S:q-1...q-1}(v^2)} \Big)\\
&&...\\
&&\Big( \sum_{v^s\in R^{n_s}}\chi_{u^s}(v^s)z_{s:00...0}^{\eta_{S:00...0}(v^s)}...\; z_{s:i_1...i_{n_s}}^{\eta_{S:i_1...i_{n_s}}(v^s)}...\; z_{s:q-1...q-1}^{\eta_{S:q-1...q-1}(v^s)} \Big).
\end{eqnarray*}
We use the definition of $\chi_u$, (\ref{characterfunction}), and rewrite each $v^i$ more explicitly under the observation of Definition \ref{byteweightdef2}. Hence, we obtain
\begin{eqnarray*}
\tilde{f}(u)&=&\Big( \sum_{\beta_{j_1},...,\beta_{j_{n_1}}\in R^{n_1}}\chi(({\beta_{j_1},...,\beta_{j_{n_1}}})u^1)z_{1:{j_1},...,{j_{n_1}}}\Big)\\
&&\Big( \sum_{\beta_{j_1},...,\beta_{j_{n_2}}\in R^{n_2}}\chi(({\beta_{j_1},...,\beta_{j_{n_2}}})u^2)z_{2:{j_1},...,{j_{n_2}}} \Big)\\
&&...\\
&&\Big( \sum_{\beta_{j_1},...,\beta_{j_{n_s}}\in R^{n_s}}\chi(({\beta_{j_1},...,\beta_{j_{n_s}}})u^s)z_{s:{j_1},...,{j_{n_s}}} \Big).
\end{eqnarray*}

Recalling that $\mu_{S:i_1,i_2,...i_{n_j}}$ counts the number of $s-$levels of the codeword $u$ that are equal to the $n_{j}-$tuple $(\beta_{i_1},\beta_{i_2},...,\beta_{i_{n_j}})$ we obtain

\[\tilde f\left( u \right) = \prod\limits_{{n_j} = a}^b {\prod\limits_{S = 1}^s {\prod\limits_{{i_1},{i_2},...,{i_{{n_j}}} \in Q}^{} A } }, \]
where
\[A = {\left( {\sum\limits_{{\beta _{{j_1}}},{\beta _{{j_2}}},...,{\beta _{{j_{{n_j}}}}} \in {R^{{n_j}}}}^{} {\chi \left( {\left( {{\beta _{{j_1}}},{\beta _{{j_2}}},...,{\beta _{{j_{{n_j}}}}}} \right)\left( {{\beta _{{i_1}}},{\beta _{{i_2}}},...,{\beta _{{i_{{n_j}}}}}} \right)} \right){z_{S:{j_1},{j_2},...,{j_{{n_j}}}}}} } \right)^{{\mu _{S:{i_1},{i_2},...,{i_{{n_j}}}}}\left( u \right)}}.\]

Now by applying the equality (\ref{unluteorem}) given at beginning of the proof, we obtain the desired result.
\end{proof}

\section{Complete poset level weight enumerator of a linear $P_R-$ code}\label{sec4}

In this section, we introduce the definition of complete m-spotty poset level weight enumerator of a linear $P_R-$code over Frobenius rings and obtain a MacWilliams-type identity for the complete poset level weight enumerator.

\begin{definition}\label{definitioncompleteweight}
Let C be a linear $P_R-$code over a poset of size $N$ and $s$ levels. Then the complete poset level weight enumerator of $C$ is defined as follows:
\begin{equation*}
C_W\left( C| z_{i: w(\overline{a})}  :   \overline{a} \in R^{n_i}, 1\leq i\leq s \right) =
\sum\limits_{c \in C} {\prod\limits_{i = 1}^s {{z_{i:{w}\left( {{u^i}} \right)}}} }
\end{equation*}
where $u^i$ denotes the part in the $i$th level of a codeword.
\end{definition}
Let $l_j$ $\left( {0 \le j \le {n_j}} \right)$ be the Hamming weight of the part in the $j$th level of a vector $v,$ and $\left( {{l_1},{l_2},...,{l_s}} \right)$ be the Hamming weight spectrum vector of $v$. Then, we can express the last equality in an equivalent but different form which can be directly expressed by the parameters $l = \left( {{l_1},{l_2},...,{l_s}} \right)$ and ${A_l} = {A_{\left( {{l_1},{l_2},...,{l_s}} \right)}}$ as follows:

\begin{equation*}
C_W\left( C| z_{i: w(\overline{a})}  :   \overline{a} \in R^{n_i}, 1\leq i\leq s \right) = \sum\limits_l  {{A_l }\prod\limits_{i = 1}^{{s}} {{z_{{j:l_j}}}} }
\end{equation*}
where the summation $\sum\limits_l$ takes over all $l$ satisfying $0 \le {l_j} \le {n_j}$ for each $j$ and $A_l$ denotes the number of codewords in $C$ having the Hamming weight spectrum vector $l = \left( {{l_1},{l_2},...,{l_s}} \right)$.

In the following theorem, we derive a MacWilliams-type identity for the complete poset level weight enumerator of a linear $P_R-$code.

\begin{theorem}\label{theocomplete}
Let $C$ be a linear $P_R-$code over a poset of size $N$ and $s$ levels. Then, the relation between the complete poset level weight enumerator of $C$ and its dual is given by
\begin{multline*}
{C_W}\left( \left. {{C^ \bot }} \right|z_{i: w(\overline{a})}  :   \overline{a} \in R^{n_i}, 1\leq i\leq s \right)
\\
= \frac{1}{{\left| C \right|}}\sum\limits_l  {{A_l }\prod\limits_{j= 1}^s {\left( {\sum\limits_{{p_j} = 0}^{{n_j}} {\left\{ {\sum\limits_{{a_j} = 0}^{{p_j}} {{{\left( { - 1} \right)}^{{a_j}}}{{\left( {q - 1} \right)}^{{p_j} - {a_j}}} {\begin{array}{*{20}{c}}
{l_j} \choose {a_j}
\end{array}}  {\begin{array}{*{20}{c}}
{{n_j}-{l_j}} \choose {{p_j}-{a_j}}
\end{array}}} } \right\}{z_{{j:p_j}}}} } \right)} },
\end{multline*}
where  $\chi$ is a nontrivial additive character of $R$ and the summation $\sum\limits_l$ takes over all $l$ satisfying $0 \le {l_j} \le {n_j}$ for each $j$ and $A_l$ denotes the number of codewords in $C$ having the Hamming weight spectrum vector $l = \left( {{l_1},{l_2},...,{l_s}} \right)$. Here, ${{l_j} \choose {a_j}}=0$ for ${l_j}  < {a_j}$ and ${0 \choose 0}=1$.
\end{theorem}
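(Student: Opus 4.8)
The plan is to reuse the Hadamard-transform machinery of Theorem \ref{byteweighttheo1}, now specialised to the complete poset level weight enumerator. First I would apply Lemma \ref{lemma2} with $V = R^N$ and weight function $f(v) = \prod_{i=1}^s z_{i:w(v^i)}$, so that $\sum_{v \in C^\perp} f(v)$ is precisely $C_W(C^\perp | \cdots)$ and the lemma rewrites it as $\frac{1}{|C|}\sum_{u \in C}\tilde f(u)$. The task therefore reduces to computing the transform $\tilde f(u) = \sum_{v \in R^N}\chi_u(v) f(v)$ for a fixed $u = (u^1,\dots,u^s) \in C$.

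The structural input is that both factors split across the $s$ levels: by the inner product (\ref{2.1}) one has $\chi_u(v) = \prod_{i=1}^s \chi_{u^i}(v^i)$, and by definition $f(v) = \prod_{i=1}^s z_{i:w(v^i)}$. Hence the sum over $v \in R^N$ decouples into a product of level sums,
\[
\tilde f(u) = \prod_{i=1}^s \left( \sum_{v^i \in R^{n_i}} \chi_{u^i}(v^i)\, z_{i:w(v^i)} \right).
\]
The crux is evaluating one such factor for a level $i$ with $l_i := w(u^i)$. Note that $z_{i:w(v^i)}$ is indexed by the \emph{total} weight and so does not factor over coordinates directly; to get around this I would first group the $v^i$ by their weight $p_i = w(v^i)$, turning the factor into $\sum_{p_i=0}^{n_i} z_{i:p_i}\bigl(\sum_{w(v^i)=p_i}\chi(\langle u^i,v^i\rangle)\bigr)$, and only then exploit multiplicativity of $\chi$ on the fixed-weight inner sum.

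For that inner character sum I would run over the choices of support $T \subseteq \{1,\dots,n_i\}$ of size $p_i$ for $v^i$ and evaluate $\prod_{j \in T}\bigl(\sum_{v^i_j \neq 0}\chi(u^i_j v^i_j)\bigr)$ coordinate by coordinate. A coordinate with $u^i_j = 0$ contributes $q-1$ (the trivial character summed over the $q-1$ nonzero values), while a coordinate with $u^i_j \neq 0$ contributes $-1$: here $\sum_{v^i_j \in R}\chi(u^i_j v^i_j) = 0$ because $v \mapsto u^i_j v$ has image the nonzero ideal $u^i_j R$, so Lemma \ref{lemma1} applies and subtracting the $v^i_j = 0$ term leaves $-1$. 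Splitting the supports according to how many of their $p_i$ elements lie in $\mathrm{supp}(u^i)$ (there are $l_i$ such positions) versus outside it (there are $n_i - l_i$) produces the count $\binom{l_i}{a_i}\binom{n_i - l_i}{p_i - a_i}$, and assembling the signs yields exactly the bracketed coefficient $\sum_{a_i=0}^{p_i}(-1)^{a_i}(q-1)^{p_i-a_i}\binom{l_i}{a_i}\binom{n_i-l_i}{p_i-a_i}$ of $z_{i:p_i}$ in the statement.

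Assembling the $s$ level factors shows $\tilde f(u)$ is the stated product, depending on $u$ only through its Hamming weight spectrum $l = (l_1,\dots,l_s)$. Summing over $u \in C$ and collecting the $A_l$ codewords of each spectrum then delivers the identity after dividing by $|C|$. The one genuinely delicate point is the coordinatewise evaluation: the vanishing $\sum_{v^i_j \in R}\chi(u^i_j v^i_j)=0$ for $u^i_j \neq 0$ is exactly where the Frobenius hypothesis enters (through the generating character and Lemma \ref{lemma1}), and the conventions $\binom{l_j}{a_j}=0$ for $l_j < a_j$ and $\binom{0}{0}=1$ are what let the single formula absorb all the boundary cases of the support count.
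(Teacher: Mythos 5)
Your proposal is correct and follows essentially the same route as the paper's own proof: apply Lemma \ref{lemma2} to $f(v)=\prod_{i=1}^{s}z_{i:w(v^i)}$, factor the Hadamard transform across the $s$ levels, evaluate each level's character sum by splitting supports according to overlap with $\mathrm{supp}(u^i)$ to get the coefficient $\sum_{a_i}(-1)^{a_i}(q-1)^{p_i-a_i}\binom{l_i}{a_i}\binom{n_i-l_i}{p_i-a_i}$, and collect codewords by weight spectrum into $A_l$. The only (immaterial) variation is that you justify the $-1$ contribution by applying Lemma \ref{lemma1} to the nonzero ideal $u^i_jR$ and subtracting $\chi(0)$, whereas the paper invokes Corollary \ref{corol1} directly, which amounts to the same computation.
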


\begin{proof}
Set $f\left( v \right) = \prod\limits_{j= 1}^n {{z_{{w}\left( {{v_j}} \right)}}}$, by Lemma \ref{lemma2}, we obtain

\begin{eqnarray*}
\widetilde f\left( u \right) &=& \sum\limits_{v \in {R^N}} {\chi \left( {\left\langle {u,v} \right\rangle } \right)f\left( v \right)}  = \sum\limits_{v \in {R^N}} {\chi \left( {\left\langle {u,v} \right\rangle } \right)\prod\limits_{j = 1}^s {{z_{{j:w}\left( {{v_j}} \right)}}} } \\
& =& \sum\limits_{{v_1} \in {R^{{n_1}}}} {...\sum\limits_{{v_2} \in {R^{{n_s}}}} {\chi \left( {{u_1}{v_1} + ... + {u_s}{v_s}} \right)\prod\limits_{j = 1}^s {{z_{{j:w}\left( {{v_j}} \right)}}} } } \\
& =& \sum\limits_{{v_1} \in {R^{{n_1}}}} {...\sum\limits_{{v_2} \in {R^{{n_s}}}} {\left( {\prod\limits_{j = 1}^s {\chi \left( {\left\langle {{u_j},{v_j}} \right\rangle } \right){z_{{j:w_H}\left( {{v_j}} \right)}}} } \right)} } \\
 &=& \prod\limits_{j = 1}^s {\left( {\sum\limits_{{v_j} \in {R^{{n_j}}}} {\chi \left( {\left\langle {{u_j},{v_j}} \right\rangle } \right){z_{{j:w}\left( {{v_j}} \right)}}} } \right)}.
\end{eqnarray*}
\end{proof}
\begin{figure}[h!]
\centering
\hfill
  \includegraphics[bb=0 0 500 150 ,scale=0.7]{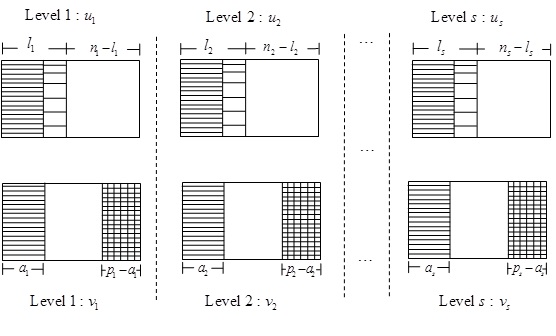}\\
  \caption{The shaded area represents nonzero bits in each level}\label{vedatt}

\end{figure}

Let $v_j$ be a vector in $R^{n_j}$ for $1\leq j\leq s$, having $w(v_j)=p_j$. Assume that the vector $u_j$ and $v_j$ both have nonzero components at $a_j$ positions out of $l_j$ nonzero positions of $u_j$, which can be chosen in $l_j \choose{a_j}$ ways. By Figure \ref{vedatt}, the remaining $p_j-a_j$ nonzero positions of $v_j$ coincide with the zero positions of $u_j$, which are $n_j-l_j$ in number, and can be chosen in ${n_j-l_j}\choose{p_j-a_j}$ ways. Also $w(v_j)=p_j$, the nonzero entries of $v_j$ appear at $p_j$ positions, say at $l_1,l_2,...,l_{p_{j}}$. Then,

\begin{equation*}
\sum\limits_{\begin{array}{*{20}{c}}
  {w\left( {{v_j}} \right) = p_j} \\
  {v_j \in {R^{n_j}}}
\end{array}} {\chi \left( {\left\langle {{u_j},{v_j}} \right\rangle } \right)z_{j:w(v_j)}}  = \sum\limits_{\begin{array}{*{20}{c}}
  {{v_{{l_1}}},{v_{{l_2}}},...,{v_{{l_{p_j}}}} \in {R^ * }} \\
  {v \in {R^b}}
\end{array}} {\chi \left( \sum_{k=1}^j{{u_{{l_k}}}{v_{{l_k}}}} \right)z_{j:p_j}},
\end{equation*}
where ${R^*} = R\backslash \left\{ 0 \right\}.$

Since both $u_j$ and $v_j$ have non-zero bits at $a_j$ positions, assume that $u_i$ has non-zero bits positions $l_1,$ $l_2,$ ..., $l_{a_j}.$ Then, we obtain the following equality:

\begin{equation*}
\begin{gathered}
  \sum\limits_{a_j = 0}^{p_j} {\sum\limits_{\begin{array}{*{20}{c}}
  {w\left( {{v_j}} \right) = a_j} \\
  {v_j \in {R^{n_j}}}
\end{array}} {\chi \left( {\left\langle {{u_j},{v_j}} \right\rangle } \right){z_{j:w(v_j)}}} }  \hfill \\
   \,\,\,\,\,\,\,\,\,= \sum\limits_{a_j = 0}^{p_j} {\left( {\begin{array}{*{20}{c}}
  l_j \\
  {a_j}
\end{array}} \right)\left( {\begin{array}{*{20}{c}}
  {n_j - l_j} \\
  p_j-a_j
\end{array}} \right)\prod\limits_{k = 1}^{a_j} {\left( {\sum\limits_{{v_{{l_k}}} \in {R^*}} {\chi \left( {{u_{{l_k}}}{v_{{l_k}}}} \right)} } \right)} \prod\limits_{k = a_j+1}^{p_j} {\left( {\sum\limits_{{v_{{l_k}}} \in {R^*}} {\chi \left( {0{v_{{l_k}}}} \right)} } \right)} }.  \hfill \\
\end{gathered}
\end{equation*}

Since $\chi$ is a non-trivial generating character, by Corollary
\ref{corol1}, ${\sum\limits_{{v_{{l_k}}} \in {R^*}} {\chi \left(
{{u_{{l_k}}}{v_{{l_k}}}} \right)} }=-1.$ However, using
$\chi(0)=1$ and $\left| {{R^*}} \right| = q - 1$, we obtain the following:

\begin{equation*}
\sum\limits_{\begin{array}{*{20}{c}}
  {w\left( {{v_j}} \right) = p_j} \\
  {v_j \in {R^{n_j}}}
\end{array}} {\chi \left( {\left\langle {{u_j},{v_j}} \right\rangle } \right){z_{j:w(v_j)}}}  = \sum\limits_{a_j = 0}^{p_j} {{{\left( { - 1} \right)}^{a_j}}{{\left( {q - 1} \right)}^{p_j-a_j}}\left( {\begin{array}{*{20}{c}}
  l_j \\
  {a_j}
\end{array}} \right)\left( {\begin{array}{*{20}{c}}
  {n_j - a_j} \\
  p_j-a_j
\end{array}} \right){z_{j:w(v_j)}}}.
\end{equation*}

As $v_j$ runs over the elements of $R^{n_j}$, its Hamming weight $p_j$
varies from $0$ to $n_j.$ Hence,

\begin{equation*}
\sum\limits_{{v_j} \in {R^{n_j}}} {\chi \left( {\left\langle
{{u_j},{v_j}} \right\rangle } \right){z_{j:w(v_j)}}}  = \sum\limits_{p_j = 0}^{n_j} {\left( {\sum\limits_{a_j = 0}^{p_j} {{{\left( { - 1} \right)}^{a_j}}{{\left( {q - 1} \right)}^{p_j-a_j}}\left( {\begin{array}{*{20}{c}}
  l_j \\
  {a_j}
\end{array}} \right)\left( {\begin{array}{*{20}{c}}
  {n_j - l_j} \\
  p_j-a_j
\end{array}} \right)} } \right)} {z_{j:p_j}}.
\end{equation*}

Since $A_l$ is the number of codewords in $C$ having the Hamming weight spectrum vector $l=(l_1,l_2,...,l_s)$, then
\begin{equation}\label{eq80}
\sum\limits_{u \in C} {\widetilde f\left( u \right) = }\sum\limits_l  {{A_l }\prod\limits_{j= 1}^s {\left( {\sum\limits_{{p_j} = 0}^{{n_j}} {\left\{ {\sum\limits_{{a_j} = 0}^{{p_j}} {{{\left( { - 1} \right)}^{{a_j}}}{{\left( {q - 1} \right)}^{{p_j} - {a_j}}} {\begin{array}{*{20}{c}}
{l_j} \choose {a_j}
\end{array}}  {\begin{array}{*{20}{c}}
{{n_j}-{l_j}} \choose {{p_j}-{a_j}}
\end{array}}} } \right\}{z_{{j:p_j}}}} } \right)} },
\end{equation}
where the summation runs over all $s-$tuples.

Substituting Eq.(\ref{eq80}) into Eq.(\ref{eq81}) Lemma \ref{lemma2} we get the result.

\subsection{Poset level weight enumerator of a linear $P_R-$code}

In this subsection, we extend the definition of poset level weight enumerator of linear codes to Frobenius rings, originally given for the binary field in \cite{akbiyik1} and establish a MacWilliams-type identity for these weight enumerators.
\begin{definition}\label{defposetlevel}
Let C be a linear $P_R-$code over a poset of size $N$ and $s$ levels. Then the poset level weight enumerator of $C$ is defined as follows:
\begin{equation}
P_W\left( {C |{z_1},{z_2},...,{z_s}} \right) = \sum\limits_{c \in C} {\prod\limits_{i = 1}^s {z_i^{{w}\left( {{u^i}} \right)}} },
\end{equation}
where $u^i$ denotes the part in the $i$th level of a codeword.
\end{definition}

\begin{example}
Let $P$ be a partial ordered set with order relation $1<2<3$. Consider two binary linear $P_{\mathbb{F}_2}-$codes  given by
\begin{center}
${C_1} = \left\{ {000,001} \right\}\,\,\,\,{\rm{and}}\,\,\,\,{C_2} = \left\{ {000,111} \right\}.$
\end{center}
It is easily seen that the linear $P_{\mathbb{F}_2}-$codes $C_1$ and $C_2$ have the following dual codes, respectively:
\begin{center}
$C_1^ \bot  = \left\{ {000,100,010,110} \right\}\,\,\,\,{\rm{and}}\,\,\,\,C_2^ \bot  = \left\{ {000,110,101,011} \right\}.$
\end{center}

By Definition \ref{defposetlevel}, the poset weight enumerators of these codes and their dual codes are obtained as follows:
\begin{center}
$W\left( {{C_1}| {z_1},{z_2},{z_3}} \right) = 1 + {z_3}\,\,\,\,{\rm{and}}\,\,\,\,W\left( {C_1^ \bot |{z_1},{z_2},{z_3}} \right) = 1 + {z_1} + {z_2} + {z_1}{z_2}$,
\end{center}
\begin{center}
$W\left( {{C_2}|{z_1},{z_2},{z_3}} \right) = 1 + {z_1}{z_2}{z_3}\,\,\,\,{\rm{and}}\,\,\,\,W\left( {C_2^ \bot |{z_1},{z_2},{z_3}} \right) = 1 + {z_1}{z_2} + {z_1}{z_3} + {z_2}{z_3}.$
\end{center}
\end{example}

In the following corollary, we establish a MacWilliams-type identity for $P_W\left( {C| {z_1},{z_2},...{z_s}} \right)$ and $P_W\left( {C^\bot | {z_1},{z_2},...{z_s}} \right)$ as follows:
\begin{corollary}\label{theoposetweight}
\begin{equation*}
{P_W}( C^ \bot)
\\= \frac{1}{{\left| C \right|}}\sum\limits_l  {{A_l }\prod\limits_{j= 1}^s {\left( {\sum\limits_{{p_j} = 0}^{{n_j}} {\left\{ {\sum\limits_{{a_j} = 0}^{{p_j}} {{{\left( { - 1} \right)}^{{a_j}}}{{\left( {q - 1} \right)}^{{p_j} - {a_j}}} {\begin{array}{*{20}{c}}
{l_j} \choose {a_j}
\end{array}}  {\begin{array}{*{20}{c}}
{{n_j}-{l_j}} \choose {{p_j}-{a_j}}
\end{array}}} } \right\}{z_j^{{p_j}}}} } \right)} },
\end{equation*}
\end{corollary}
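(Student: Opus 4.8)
The plan is to obtain this corollary directly from Theorem \ref{theocomplete} by a single change of variables, since the poset level weight enumerator is nothing but a specialization of the complete poset level weight enumerator. First I would compare Definition \ref{defposetlevel} with Definition \ref{definitioncompleteweight} and observe that $P_W(C|z_1,\ldots,z_s)$ is obtained from $C_W(C|z_{i:w(\overline{a})})$ by the monomial substitution $z_{i:j}\mapsto z_i^{\,j}$, applied uniformly for each level $1\le i\le s$ and each admissible weight value $0\le j\le n_i$. Indeed, under this substitution each factor $z_{i:w(u^i)}$ becomes $z_i^{\,w(u^i)}$, so the defining sum $\sum_{c\in C}\prod_{i=1}^s z_{i:w(u^i)}$ collapses exactly to $\sum_{c\in C}\prod_{i=1}^s z_i^{\,w(u^i)}$, which is precisely $P_W(C)$.

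Next I would invoke the fact that the MacWilliams-type identity established in Theorem \ref{theocomplete} is an identity of polynomials in the indeterminates $\{z_{i:j}\}$, and hence is preserved under any specialization of those indeterminates. Applying the substitution $z_{j:p_j}\mapsto z_j^{\,p_j}$ to both sides of that identity, the left-hand side $C_W(C^\bot|z_{i:w(\overline{a})})$ transforms into $P_W(C^\bot|z_1,\ldots,z_s)$, while on the right-hand side every occurrence of $z_{j:p_j}$ is replaced by $z_j^{\,p_j}$. Since the coefficients $(-1)^{a_j}(q-1)^{p_j-a_j}\binom{l_j}{a_j}\binom{n_j-l_j}{p_j-a_j}$ and the normalizing factor $\tfrac{1}{|C|}$ are unaffected by the substitution, this yields verbatim the claimed expression for $P_W(C^\bot)$.

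There is essentially no genuine obstacle here beyond correctly identifying the two enumerators with one another; the corollary is a formal consequence of Theorem \ref{theocomplete} requiring no further computation. The only point that deserves a word of care is the well-definedness and consistency of the substitution $z_{i:j}\mapsto z_i^{\,j}$ across all levels: at each level $i$ the Hamming weight $w(u^i)$ ranges over $\{0,1,\ldots,n_i\}$, and the map is applied in the same way at every level, so no two distinct complete-enumerator variables are forced to collapse inconsistently. Once this specialization is justified, substituting into the right-hand side of Theorem \ref{theocomplete} and reading off the transformed left-hand side completes the proof.
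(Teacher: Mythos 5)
Your proposal is correct and follows exactly the paper's own argument: the corollary is deduced from Theorem \ref{theocomplete} by the specialization $z_{j:p_j}\mapsto z_j^{p_j}$, which turns the complete poset level weight enumerator into the poset level weight enumerator. Your additional remarks on the consistency of the substitution and on the identity being preserved under specialization simply make explicit what the paper's one-line proof leaves implicit.
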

\begin{proof}
This follows from Theorem \ref{theocomplete} that the poset level weight enumerator of a linear $P_R-$code $C$ can be obtained from the complete poset weight enumerator of $C$ by replacing $z_{j:{p_j}}$ with $z_j^{{p_j}}$ for each $j$.
\end{proof}
\subsection{M-spotty poset level weight enumerator of a linear $P_R-$code}

In this subsection, the results in \cite{umanesan1} are generalized. Further, it is easily seen that the results of this section are special results of the previous section.
\begin{definition}
 A $t_i/n_i$ spotty byte error is defined as $t_i$ or fewer bits errors within a $n_i-$bit  byte, where $1\leq t_i \leq n_i$ for $i \in \left\{ {1,2,..,s} \right\}$.
\end{definition}
Here, if we let $n=n_i$ for all $i$ and take the ring to be the binary field then this definition and the results in \cite{umanesan1} follow as corollaries.

We now introduce the m-spotty poset level weight and the m-spotty poset level distance as follows:

\begin{definition}
Let $e \in R^N$ be an error vector and $e^i \in R^{n_i}$ be the $i$th level of $e$, where $1 \leq i \leq s$. The m-spotty poset level weight, denoted by $w_{MP},$ is defined as
\begin{equation*}
w_{MP}(e)=\sum_{i=1}^s \lceil \frac{w(e^i)}{t_i} \rceil,
\end{equation*}
where $\lceil \frac{w(e^i)}{t_i} \rceil$ denotes the ceiling of $\frac{w(e^i)}{t_i}$, i.e. the smallest integer which is not less than $\frac{w(e^i)}{t_i}$.
\end{definition}

\begin{definition}
Let $u$ and $v$ be codewords of a linear $P_R-$code $C$. Then m-spotty poset level distance between $u$ and $v$, denoted by $d_{MP}(u,v)$, is defined as follows:\\
\begin{equation*}
d_{MP}(u,v)=\sum_{i=1}^s \lceil \frac{d(u^i,v^i)}{t_i} \rceil.
\end{equation*}
Here $u^i$ and $v^i$ denote the $i$th level of $u$ and $v$, respectively.
\end{definition}
\begin{theorem}
The m-spotty poset level distance is a metric over $R$.
\end{theorem}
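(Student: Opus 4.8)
The plan is to verify the three defining axioms of a metric for $d_{MP}$ on $R^N$: non-negativity together with the identity of indiscernibles, symmetry, and the triangle inequality. The first two follow immediately from the corresponding properties of the Hamming distance $d$ restricted to each level, so the genuine content lies in the triangle inequality, which I would reduce to a level-wise estimate.

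First I would dispose of non-negativity and symmetry. Since $d(u^i,v^i)\ge 0$ and $t_i\ge 1$, each summand $\lceil d(u^i,v^i)/t_i\rceil$ is a non-negative integer, so $d_{MP}(u,v)\ge 0$. Moreover, for $x\ge 0$ one has $\lceil x\rceil=0$ exactly when $x=0$; hence $d_{MP}(u,v)=0$ forces $d(u^i,v^i)=0$ for every $i$, i.e.\ $u^i=v^i$ for all $i$, which is precisely $u=v$, and the converse is clear. Symmetry holds because $d(u^i,v^i)=d(v^i,u^i)$ for each $i$, so the sums defining $d_{MP}(u,v)$ and $d_{MP}(v,u)$ agree term by term.

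The triangle inequality is the main step, and I would prove it level by level. It suffices to establish, for each fixed $i$,
\[
\Big\lceil \frac{d(u^i,w^i)}{t_i}\Big\rceil \;\le\; \Big\lceil \frac{d(u^i,v^i)}{t_i}\Big\rceil + \Big\lceil \frac{d(v^i,w^i)}{t_i}\Big\rceil,
\]
since summing over $i=1,\dots,s$ then yields $d_{MP}(u,w)\le d_{MP}(u,v)+d_{MP}(v,w)$. For this I would combine two elementary facts about the ceiling function: monotonicity, that $a\le b$ implies $\lceil a\rceil\le\lceil b\rceil$, and subadditivity, that $\lceil a+b\rceil\le\lceil a\rceil+\lceil b\rceil$. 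The Hamming distance on $R^{n_i}$ already satisfies $d(u^i,w^i)\le d(u^i,v^i)+d(v^i,w^i)$; dividing by $t_i>0$ and applying monotonicity gives $\lceil d(u^i,w^i)/t_i\rceil\le \lceil (d(u^i,v^i)+d(v^i,w^i))/t_i\rceil$, and subadditivity then splits the right-hand ceiling into the desired sum.

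The only point needing any care is the subadditivity of the ceiling, which I would justify directly: $\lceil a\rceil+\lceil b\rceil$ is an integer dominating $a+b$ (because $\lceil a\rceil\ge a$ and $\lceil b\rceil\ge b$), and since $\lceil a+b\rceil$ is by definition the \emph{smallest} integer dominating $a+b$, it follows that $\lceil a+b\rceil\le\lceil a\rceil+\lceil b\rceil$. I expect this ceiling inequality to be the technical crux, though it is entirely routine; everything else is a transcription of the Hamming-metric axioms. No property of the Frobenius ring $R$ beyond the fact that $d$ is a metric on each $R^{n_i}$ enters the argument, so the proof is uniform in $R$ and in the chosen parameters $t_1,\dots,t_s$.
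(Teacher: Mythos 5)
Your proposal is correct and follows essentially the same route as the paper: dispose of non-negativity, identity of indiscernibles, and symmetry via the corresponding Hamming properties, then prove the triangle inequality level by level using monotonicity and subadditivity of the ceiling function applied to the Hamming triangle inequality, and sum over the levels. The only difference is that you explicitly justify the subadditivity $\lceil a+b\rceil\le\lceil a\rceil+\lceil b\rceil$, a step the paper uses without proof, which makes your write-up slightly more complete.
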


\begin{proof}
It is easy to see that $d_{MP}(u,v) \ge 0$ for $u\neq v$, $d_{MP}(u,v)=0$ for $u=v$ and $d_{MP}(u,v)=d_{MP}(v,u)$. So, we only need to show that the triangle equality, i.e. $d_{MP}(u,v)\leq d_{MP}(u,w)+d_{MP}(w,v)$ for every $u$, $v$ and $w$ $\in R^N$. Since the Hamming distance function is a metric, then $d(u^i,v^i)\leq d(u^i,w^i)+d(w^i,v^i)$ can be written for $i=1,2,...,s$. So $\frac{d(u^i,v^i)}{t_i}\leq \frac{d(u^i,w^i)+d(w^i,v^i)}{t_i}$. For each level\\
\[\lceil\frac{d(u^1,v^1)}{t_1}\rceil \leq \lceil\frac{d(u^1,w^1)+d(w^1,v^1)}{t_1}\rceil \leq \lceil\frac{d(u^1,w^1)}{t_1}\rceil + \lceil\frac{d(w^1,v^1)}{t_1}\rceil, \]\\
\[\lceil\frac{d(u^2,v^2)}{t_2}\rceil \leq \lceil\frac{d(u^2,w^2)+d(w^2,v^2)}{t_2}\rceil \leq \lceil\frac{d(u^2,w^2)}{t_2}\rceil + \lceil\frac{d(w^2,v^2)}{t_2}\rceil, \]
\[.\]
\[.\]
\[.\]
\[\lceil\frac{d(u^s,v^s)}{t_s}\rceil \leq \lceil\frac{d(u^s,w^s)}{t_s}\rceil + \lceil\frac{d(w^s,v^s)}{t_s}\rceil. \]
By summing all the inequalities from $i=1$ to $n$, we see that $d_{MP}(u,v)\leq d_{MP}(u,w)+d_{MP}(w,v)$.
\end{proof}

\begin{definition}
The weight enumerator for m-spotty byte error control code $C$ is defined as:
\begin{equation*}
M_W(C|z_1,z_2,...,z_s)=\sum_{u\in C}z^{w_{MP}(u)}=\sum_{u\in C}\prod _{i=1}^sz_i^{w_{MP}(u^i)}.
\end{equation*}
\end{definition}

The following theorem gives a relation between the m-spotty poset level weight enumerator of a linear $P_R-$code and that of its dual.

\begin{corollary}\label{theomspottyposet}
\begin{equation*}
{M_W}( C^ \bot)
\\= \frac{1}{{\left| C \right|}}\sum\limits_l  {{A_l }\prod\limits_{j= 1}^s {\left( {\sum\limits_{{p_j} = 0}^{{n_j}} {\left\{ {\sum\limits_{{a_j} = 0}^{{p_j}} {{{\left( { - 1} \right)}^{{a_j}}}{{\left( {q - 1} \right)}^{{p_j} - {a_j}}} {\begin{array}{*{20}{c}}
{l_j} \choose {a_j}
\end{array}}  {\begin{array}{*{20}{c}}
{{n_j}-{l_j}} \choose {{p_j}-{a_j}}
\end{array}}} } \right\}{z_j^{{\left\lceil {\frac{{{p_j}}}{{{t_j}}}} \right\rceil }}}} } \right)} },
\end{equation*}
\end{corollary}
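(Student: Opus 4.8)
The plan is to obtain this identity as a direct specialization of the complete poset level MacWilliams identity in Theorem \ref{theocomplete}, exactly in the spirit of Corollary \ref{theoposetweight}. There, the poset level enumerator $P_W$ arose from the complete enumerator $C_W$ under the monomial substitution $z_{j:p_j}\mapsto z_j^{p_j}$; I expect the m-spotty enumerator $M_W$ to arise from the very same complete enumerator under the closely related substitution $z_{j:p_j}\mapsto z_j^{\lceil p_j/t_j\rceil}$.

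The first step is to record that, for each level $i$, the m-spotty contribution satisfies $w_{MP}(u^i)=\lceil w(u^i)/t_i\rceil$, so that
\[
M_W(C) = \sum_{u\in C}\prod_{i=1}^s z_i^{\lceil w(u^i)/t_i\rceil}
\]
is precisely what one obtains from $C_W(C)=\sum_{u\in C}\prod_{i=1}^s z_{i:w(u^i)}$ by sending each variable $z_{i:p}$, indexed by a level $i$ and a byte Hamming weight $p\in\{0,1,\dots,n_i\}$, to the power $z_i^{\lceil p/t_i\rceil}$ of a single indeterminate $z_i$. Since $C_W$ is a polynomial in the finitely many variables $z_{i:p}$ and the identity of Theorem \ref{theocomplete} holds formally in these variables, I would simply apply this specialization to both sides of that identity.

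On the left, the specialization turns $C_W(C^\perp)$ into $M_W(C^\perp)$ by the observation above. On the right, the bracketed coefficient multiplying $z_{j:p_j}$ is left untouched, since it depends only on $l_j$, $n_j$, $p_j$ and $q$, while the monomial $z_{j:p_j}$ becomes $z_j^{\lceil p_j/t_j\rceil}$; this yields exactly the claimed expression. The main obstacle is purely bookkeeping: one must check that $z_{i:p}\mapsto z_i^{\lceil p/t_i\rceil}$ is a legitimate evaluation of the polynomial identity, being a substitution of values for indeterminates, and that it carries $C_W$ to $M_W$ level by level. The ceiling plays no role beyond relabelling the exponent attached to each weight $p_j$, so no fresh combinatorial computation is required.
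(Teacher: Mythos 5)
Your proposal is correct and follows exactly the paper's own route: the paper likewise deduces the corollary from Theorem \ref{theocomplete} by the substitution $z_{j:p_j}\mapsto z_j^{\lceil p_j/t_j\rceil}$, which carries the complete poset level enumerator to the m-spotty poset level enumerator. Your write-up merely makes explicit the two points the paper leaves implicit (that $w_{MP}(u^i)=\lceil w(u^i)/t_i\rceil$ makes $M_W$ a specialization of $C_W$, and that substituting into a formal polynomial identity is legitimate), which is a welcome but not essentially different elaboration.
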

\begin{proof}
This follows from Theorem \ref{theocomplete} that the poset level weight enumerator of a linear $P_R-$code $C$ can be obtained from the complete poset weight enumerator of $C$ by replacing $z_{j:{p_j}}$ with $z_j^{\left\lceil {{{{p_j}} \mathord{\left/
 {\vphantom {{{p_j}} {{t_j}}}} \right.
 \kern-\nulldelimiterspace} {{t_j}}}} \right\rceil }$ for each $j$.
\end{proof}

\section{Examples}\label{sec5}

\begin{figure}[h!]
\centering
\hfill
  \includegraphics[bb=0 0 300 150 ,scale=0.9]{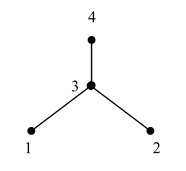}\\
  \caption{A poset of size 4 and 3 levels}\label{noyan}
\end{figure}

In this section, we present  some illustrative examples for byte poset level weight enumerators, complete poset level weight enumerators, poset level weight enumerators and m-spotty poset level weight enumerators over  a special poset given in  Figure \ref{noyan}. Further, their dual weight enumerators are presented as well.

\begin{example}\label{example200}
Let $C=\{0000,1010,0111,1101\}$ be a linear $P_{\mathbb{F}_2}-$code. It can be easily seen that ${C^ \bot } = \left\{ {0000,1011,0101,1110} \right\}$. By Definition \ref{byteweightdef3}, the byte weight enumerators of these codes are as follows:
\begin{eqnarray*}
& B_W(C|z_{1:00},z_{2:00},z_{3:00},z_{1:01},z_{2:01},z_{3:01},z_{1:10},z_{2:10},z_{3:10},z_{1:11},z_{2:11},z_{3:11},z_{1:0},z_{2:0},z_{3:0},z_{1:1},z_{2:1},z_{3:1})\\
& =z_{1:00}z_{2:0}z_{3:0}+z_{1:10}z_{2:1}z_{3:0}+z_{1:10}z_{2:1}z_{3:1}+z_{1:11}z_{2:0}z_{3:1},
\end{eqnarray*}

and by applying Theorem \ref{byteweighttheo1}, we immediately obtain the byte poset level weight enumerator of $C^\bot$ as follows:

\begin{eqnarray*}
& B_W(C^{\perp}|z_{1:00},z_{2:00},z_{3:00},z_{1:01},z_{2:01},z_{3:01},z_{1:10},z_{2:10},z_{3:10},z_{1:11},z_{2:11},z_{3:11},z_{1:0},z_{2:0},z_{3:0},z_{1:1},z_{2:1},z_{3:1})\\
& =z_{1:00}z_{2:0}z_{3:0}+z_{1:10}z_{2:1}z_{3:1}+z_{1:01}z_{2:0}z_{3:1}+z_{1:11}z_{2:1}z_{3:0}.
\end{eqnarray*}
\end{example}

\begin{example}\label{example201}
Consider again the linear $P_{\mathbb{F}_2}-$code $C$ in Example \ref{example200}. By Definition \ref{definitioncompleteweight}, we compute the complete weight enumerators of these codes as follows:

\begin{eqnarray*}
&C_W(C|z_{1:0},z_{2:0},z_{3:0},z_{1:1},z_{2:1},z_{3:1},z_{1:2})\\
&=z_{\textbf{1}:0}z_{\textbf{2}:0}z_{\textbf{3}:0}+z_{\textbf{1}:1}z_{\textbf{2}:1}z_{\textbf{3}:0}+z_{\textbf{1}:1}z_{\textbf{2}:1}z_{\textbf{3}:1}+z_{\textbf{1}:2}z_{\textbf{2}:0}z_{\textbf{3}:1},
\end{eqnarray*}

and by Theorem \ref{theocomplete}, we obtain the complete poset weight enumerator of $C^\bot$ in the above equality, that is,
\begin{eqnarray*}
&C_W(C^{\perp}|z_{1:0},z_{2:0},z_{3:0},z_{1:1},z_{2:1},z_{3:1},z_{1:2})\\
&=z_{\textbf{1}:0}z_{\textbf{2}:0}z_{\textbf{3}:0}+z_{\textbf{1}:1}z_{\textbf{2}:1}z_{\textbf{3}:1}+z_{\textbf{1}:1}z_{\textbf{2}:0} z_{\textbf{3}:1}+z_{\textbf{1}:2}z_{\textbf{2}:1}z_{\textbf{3}:0}.
\end{eqnarray*}
\end{example}

\begin{example}
By Corollary \ref{theoposetweight} and Corollary \ref{theomspottyposet}, we replace $z_{j:k}$ with $z_j^k$ and $z_j^k$ with $z_j^{\left\lceil {{k \mathord{\left/
 {\vphantom {k {{t_j}}}} \right.
 \kern-\nulldelimiterspace} {{t_j}}}} \right\rceil }$ in Example \ref{example201}, respectively, where $1 \leq j \leq 3,$ $t=(t_1,t_2,t_3)=(2,1,1)$. Therefore, the poset weight enumerator and the m-spotty poset level weight enumerator of $C^\bot$ via the complete poset weight enumerator of $C^\bot$ given in Example \ref{example201} can be easily seen as follows:
\begin{eqnarray*}
&P_W(C^{\perp}|z_1, z_2, z_3)={z_\textbf{1}}^0{z_\textbf{2}}^0{z_\textbf{3}}^0+{z_\textbf{1}}^1{z_\textbf{2}}^1{z_\textbf{3}}^1+{z_\textbf{1}}^1{z_\textbf{2}}^0 {z_\textbf{3}}^1+{z_\textbf{1}}^2{z_\textbf{2}}^1{z_\textbf{3}}^0\\
&=1+z_1z_2z_3+z_1z_3+z_1^2z_2,
\end{eqnarray*}
and
\begin{equation*}
M_W(C^{\perp}|z_1,z_2,z_3)=1+z_1z_2z_3+z_1z_3+z_1z_2.
\end{equation*}
\end{example}

\section{Conclusion}\label{sec6}
In this study, byte poset level weight enumerators, complete level weight enumerators, poset level weight enumerators and $m-$spotty poset level weight enumerators of  linear poset codes over Frobenius rings are introduced. Respectively their MacWilliams identities are established.  Naturally this contribution has led to generalizations of some recent work done on weight enumerators and due to this generalizations some well-known weight enumerators and their MacWilliams type identities are easily obtained as corollaries. Since these are new weight enumerators, similar to the classical cases there are many new directions for further studying. For instance, Gleason's type theorems and lattices related to these weight enumerators are just a few to mention here.

\bibliographystyle{amsplain}

\end{document}